\newcommand{\hide}[1]{}
\newcommand{\QED}{\hfill$\qed$}
\newtheorem{theorem}{Theorem}
\newtheorem{lemma}{Lemma}
  \newtheorem{example}{Example}
 \newtheorem{algorithm}{Algorithm}
 \newtheorem{procedure}{Procedure}
\journal{Discrete Applied Mathematics}
\begin{document}
\begin{frontmatter}

\title{Linear-Time Fitting of a $k$-Step Function}%
\author{Binay Bhattacharya 
}
\address{School of Computing Science, Simon Fraser University,
Burnaby, Canada}
\ead{binay@sfu.ca}
\author{Sandip Das}
\ead{sandip.das.69@gmail.com}
\address{Advanced Computing and Microelectronics Unit, Indian Stat. Inst., Kolkata, India}
\author{Tsunehiko Kameda}
\ead{tikokameda@gmail.com}
\address{School of Computing Science, Simon Fraser University,
Burnaby, Canada}

\begin{abstract}
Given a set of $n$ weighted points on the $x$-$y$ plane,
we want to find a step function consisting of $k$ horizontal steps
such that the maximum weighted vertical distance from any point to a step is minimized.
Using the prune-and-search technique,
we solve this problem in $O(n)$ time when $k$ is a constant.
Our approach can be applied directly or with small modifications
to solve other similar problems,
such as the maximum error histogram problem
and the line-constrained $k$-center problem,
in $O(n)$ time when $k$ is a constant.
\end{abstract}
\begin{keyword}
linear-time algorithm\sep step function fitting\sep weighted points\sep prune and search\sep
maximum error histogram
\end{keyword}
\end{frontmatter}

\section{Introduction}\label{sec:intro}
Given an integer $k > 0$ and a set $P$ of $n$ weighted points in the plane,
our objective is to fit a $k$-step function to them so that the maximum weighted
vertical distance of the points to the step function is minimized.
We call this problem the {\em $k$-step function problem}.
It has applications in areas such as geographic information systems, 
digital image analysis, data mining, facility locations, and
data representation (histogram), etc.

In the unweighted case, if the points are presorted,
Fournier and Vigneron \cite{fournier2011} showed that
the problem can be solved in linear time using the results of 
\cite{frederickson1991a,frederickson1984,gabow1984}.
Later they showed that the weighted version of the problem can also be solved
in $O(n\log n)$ time~\cite{fournier2013},
using Megiddo's parametric search technique \cite{megiddo1983a}.
Prior to these results, the problem had been discussed by several researchers
\cite{chen2009,diazbanez2001,liu2010,lopez2008,wang2002}. 

Guha and Shim \cite{guha2007} considered this problem in the context of {\em histogram construction}.
In database research, it is known as the {\em maximum error histogram} problem.
In the weighted case,
 this problem is to partition the given points into $k$ buckets based on their $x$-coordinates,
such that the maximum $y$-spread in each bucket is minimized.
This problem is of interest to the data mining community as well (see \cite{guha2007} for references).
Guha and Shim \cite{guha2007} computed the optimum histogram of size $k$,
minimizing the maximum error.
They present algorithms which run in linear time when the points are unweighted,
and in $O(n\log n + k^2\log^6n)$ time and $O(n\log n)$ space when the points are weighted.

Our objective is to improve the above result to $O(n)$ time when $k$ is a constant.
We show that we can optimally fit a $k$-step function to unsorted weighted points in linear time.
We earlier suggested a possible approach to this problem at an OR workshop~\cite{bhattacharya2013b}.
Here we flesh it out, presenting a complete and rigorous algorithm and proofs.
Our algorithm exploits the well-known properties of prune-and-search along the lines in \cite{bhattacharya2007}.

This paper is organized as follows.
Section \ref{sec:prelim} introduces the notations used in the rest of this paper.
It also briefly discusses how the prune-and-search technique can be used
to optimally fit a $1$-step function (one horizontal line) to a given set of weighted points.
We then consider in Section~\ref{sec:cond2step} a variant of the 2-step function problem,
called the anchored 2-step function problem.
We discuss a ``big partition'' in the context of the $k$-partition of a point set
corresponding to a $k$-step function in Section \ref{sec:kstep}.
Section \ref{sec:algorithm} presents our algorithm for the optimal $k$-step function problem.
Section \ref{sec:conclusion} concludes the paper,
mentioning some applications of our results.

\section{Preliminaries}\label{sec:prelim}
\subsection{Model}\label{sec:model}
Let $P=\{p_1,p_2,\ldots, p_n\}$ be a set of $n$ weighted points in the plane.
For $1\leq i\leq n$ let $p_i.x$ (resp. $p_i.y$) denote the $x$-coordinate (resp. $y$-coordinate)
of point $p_i$, and let $w(p_i)$ denote its weight.
The points in $P$ are not sorted,
except that $p_1.x\leq p_i.x\leq p_n.x$ holds for any $i=1, \ldots, n$.\footnote{For the sake
of simplicity we assume that no two points have the same $x$ or $y$ coordinate.
But the results are valid if this assumption is removed.
}
Let $F_k(x)$ denote a generic $k$-step function,
whose $j^{th}$ segment (=step) is denoted by $s_j$.
For $1\leq j \leq k-1$, segment $s_j$ represents a half-open horizontal interval $[s_j^{(l)}, s_j^{(r)})$
between two points $s_j^{(l)}$ and $s_j^{(r)}$.
The last segment $s_k$ represents a closed horizontal interval $[s_k^{(l)}, s_k^{(r)}]$.
Note that $s_j^{(l)}.y= s_j^{(r)}.y$,
which we denote by $s_j.y$.
We assume that for any $k$-step function $F_k(x)$,
 segments $s_1$ and $s_k$ satisfy $s_1^{(l)}.x = p_1.x$ and $s_k^{(r)}.x = p_n.x$,
respectively.
Segment $s_j$ is said to {\em span} a set of points $Q\subseteq P$,
if $s_j^{(l)}.x \leq p.x <s_j^{(r)}.x$ holds for each $p\in Q$.
A $k$-step function $F_k(x)$ gives rise to a {\em $k$-partition} of $P$, 
${\cal P}= \{P_j \mid j = 1, 2, \ldots, k\}$,
 such that segment $s_i$ spans $P_i$.
 It satisfies the {\em contiguity condition} in the sense that for each partition $P_j$,
if $a.x \leq b.x$ for $a,b\in P_j$,
then every point $p$ with $a.x \leq p.x \leq b.x$ also belongs to $P_j$.
In the rest of this paper, we
consider only partitions that satisfy the contiguity condition.
Fig.~\ref{fig:4steps} shows an example of fitting a 4-step function $F_4(x)$.
\begin{figure}[ht]
\centering
\includegraphics[height=2.8cm]{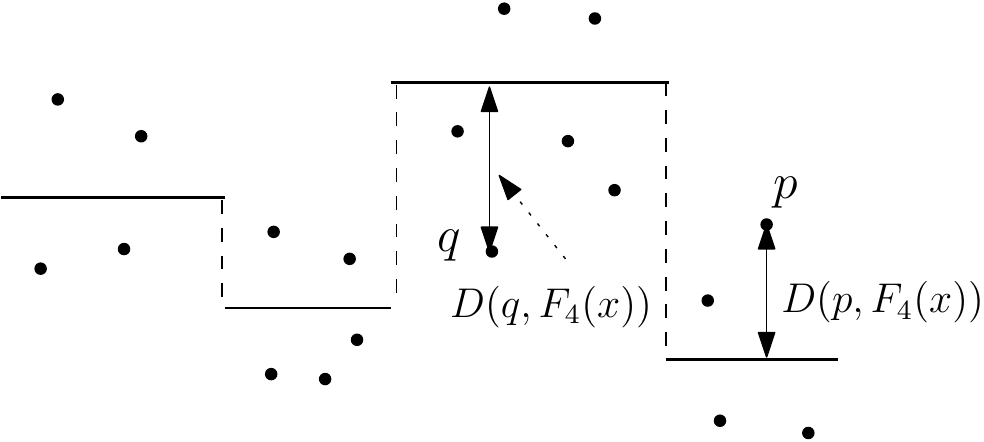}
\hspace{2mm}
\caption{Fitting a 4-step function.
}
\label{fig:4steps}
\end{figure}

Given a step function $F(x)$, 
defined over an $x$-range that contains $p.x$,
let $d(p, F(x))$ denote the vertical distance of $p$ from $F(x)$.
We define the {\em cost} of $p$ with respect to $F(x)$
by the weighted distance
\begin{equation}\label{eqn:pointCost}
D(p, F(x))\triangleq d(p, F(x))w(p).
\end{equation}
We generalize the cost definition for a set $Q \subseteq P$ of points by
\begin{equation}\label{eqn:setCost}
D(Q, F(x)) \triangleq \max_{p \in Q} \{D(p, F(x))\}.
\end{equation}
Point $p_h$ is said to be {\em critical} with respect to $F(x)$ if
\begin{equation}
D(p_h, F(x)) = D(P, F(x)).
\end{equation}
Note that there can be more than one critical point with respect to a given step function.
We similarly define a critical point with respect to a single segment of $F(x)$
as a point that has the maximum vertical weighted distance to it.

For a set of weighted points in the plane or on a line,
the point that minimizes the maximum weighted distance to
them is called the {\em weighted 1-center}~\cite{bhattacharya2007}.
Given a {\em $k$-partition} of $P$, 
${\cal P}= \{P_j \mid j = 1, 2, \ldots, k\}$,
it is clear that $\exists P_i \in {\cal P}$ such that $|P_i| \geq \lfloor n/k\rfloor$.
We call any such partition a {\em big partition}.
A big partition spanned by a segment in an optimal solution plays an important role.
(See Procedure {\tt Big$({\cal P}, k)$} in Sec.~\ref{sec:findBig}.)

\subsection{Bisector}\label{sec:bisector}
If we map each point $p_i\in P$ onto the $y$-axis,
the {\em cost} of (or the weighted distance from) $p_i$
grows linearly from 0 at $p_i.y$ in each direction as a function of $y$.
Consider arbitrary two points $p$ and $q$.
Their costs intersect at either one or two points,
one of which always lies between $p.y$ and $q.y$.
If there are two intersections,
the other intersection lies outside interval $[p.y,q.y]$ on the $y$-axis.
If $p.y\not=q.y$ and $w(p)=w(q)$ hold
then there is only one intersection.\footnote{If $p.y=q.y$,
we can ignore one of the points with the smaller weight.
}
Let $a$ (resp. $b$) be the $y$-coordinate of the upper (resp. lower) intersection point,
where $b\leq a$.
We call the horizontal line $y=a$ (resp. $y=b$)
the {\em upper} (resp. {\em lower}) {\em bisector} of $p$ and $q$.
If there are only one intersection,
we pretend that there were two at $b=a$, which lies between $p.y$ and $q.y$.
(Note that the $y$-axis is shown horizontally in Figs.~\ref{fig:2points1} and \ref{fig:2points3} below,
where $y$ increases to the right.)

Let $\wp$ denote the $\lceil n/2\rceil$ pairs,
formed by pairing up the points in $P$ arbitrarily.
The cost lines of the points in each pair $(p,q)$ intersect
as shown in Figs.~\ref{fig:2points1}.
\begin{figure}[h]
\centering
\subfigure[]{\includegraphics[height=1.5cm]{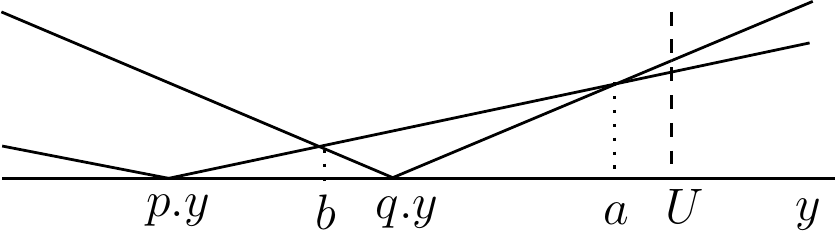}}
\hspace{2mm}
\subfigure[]{\includegraphics[height=1.5cm]{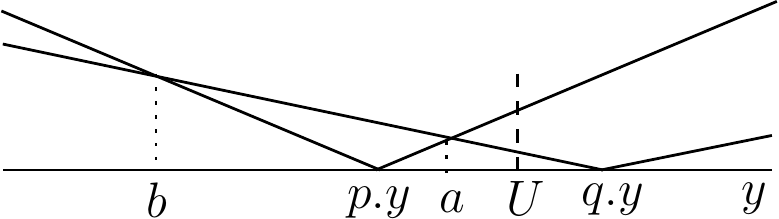}}
\caption{1/3 of upper intersections are at $y<U$:
(a) $p$ can be ignored at $y>U$; (b) $q$ can be ignored at $y>U$.
}
\label{fig:2points1}
\end{figure}
Let $y=U$ be the line at or above which at least 2/3 of the upper bisectors lie,
and at or below which at least 1/3 of the upper bisectors lie.
We use $\wp^U_{2/3}$ and $\wp^U_{1/3}$ to name the subsets of
 $\wp$ that have these two sets of bisectors, respectively.
 Note that $|\wp^U_{2/3}|\geq n/2\times 2/3 = n/3$ and $|\wp^U_{1/3}|\geq n/2\times 1/3 = n/6$.
Similarly, let  $y=L$ be the line at or below which at least 2/3 of the  
lower bisectors lie,
and  at or above which at least 1/3 of the bisectors lie.\footnote{{We define $U$ and $L$ this way,
because many points could lie on them.}}
We use $\wp^L_{2/3}$ and $\wp^L_{1/3}$ to name the subsets of
$\wp$ that have these two sets of bisectors, respectively.
 Note that $|\wp^L_{2/3}|\geq n/2\times 2/3 = n/3$ and $|\wp^L_{1/3}|\geq n/2\times 1/3 = n/6$.
\begin{lemma} \label{lem:one6th}
We can identify $n/6$ points that can be removed without affecting the weighted 1-center
 for the values
of their $y$-coordinates.
\end{lemma}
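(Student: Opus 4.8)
The plan is to show that, once we locate the weighted $1$-center $y^\ast$ (the minimizer of $g(y)\triangleq\max_{p\in P}w(p)|y-p.y|$) relative to the two lines $U$ and $L$, at least $n/6$ of the paired points become redundant. First I would analyze a single pair $(p,q)\in\wp$ with lower and upper bisectors $b\le a$. Reading off the two cost lines, the point of larger weight has the larger cost when $y$ is far from both vertices and the smaller cost in between; concretely the \emph{lighter} point of the pair is dominated for every $y\notin[b,a]$, while the \emph{heavier} point is dominated for every $y\in[b,a]$. Hence, depending only on whether $y^\ast$ lies above $a$, below $b$, or inside $[b,a]$, one specific point of the pair can be discarded, because its cost never attains the maximum in a neighbourhood of $y^\ast$.

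Next I would determine the position of $y^\ast$ with respect to $U$ and to $L$. Since $g$ is convex and piecewise linear, a single evaluation of $g$ at a test line $t$, together with the slopes at $t$ of the cost lines attaining the maximum there, reveals in $O(n)$ time whether $y^\ast\ge t$ or $y^\ast\le t$; I would run this test once for $t=U$ and once for $t=L$.

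The three resulting cases each yield $n/6$ removable points. If $y^\ast\ge U$, then every pair in $\wp^U_{1/3}$ has $a\le U\le y^\ast$, so $y^\ast$ lies above the upper bisector and the lighter point is removable; as $|\wp^U_{1/3}|\ge n/6$ this discards $n/6$ points. Symmetrically, if $y^\ast\le L$, every pair in $\wp^L_{1/3}$ has $b\ge L\ge y^\ast$, the lighter point is removable, and again $n/6$ points go. The remaining possibility is $L<y^\ast<U$; here I would use the pairs in $\wp^U_{2/3}\cap\wp^L_{2/3}$, for which $b\le L<y^\ast<U\le a$, so $y^\ast\in[b,a]$ and the \emph{heavier} point is removable. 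By inclusion--exclusion $|\wp^U_{2/3}\cap\wp^L_{2/3}|\ge n/3+n/3-n/2=n/6$, so this case too discards $n/6$ points. Since every value of $y^\ast$ satisfies at least one of $y^\ast\ge U$, $y^\ast\le L$, or $L<y^\ast<U$, one of the three cases always applies.

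It remains to confirm that discarding a dominated point does not move $y^\ast$, which is the routine part: each removed point is dominated on an interval $I$ (namely $[a,\infty)$, $(-\infty,b]$, or $[b,a]$) that contains $y^\ast$, so the reduced cost function $g'$ agrees with $g$ on $I$ and satisfies $g'\le g$ outside $I$; convexity of $g'$ together with the fact that its minimum over $I$ is still attained at $y^\ast$ forces $y^\ast$ to remain the global minimizer. The main obstacle I anticipate is the middle case: one must notice that neither $U$ alone nor $L$ alone suffices when $L<y^\ast<U$, and that the ``heavy-point'' domination inside $[b,a]$ combined with the inclusion--exclusion count on $\wp^U_{2/3}\cap\wp^L_{2/3}$ is exactly what rescues the $n/6$ bound; getting the domination direction (light versus heavy) correct in each region, and checking the boundary ties, is where the care is needed.
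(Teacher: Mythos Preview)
Your proposal is correct and follows essentially the same three-case argument as the paper: locate the weighted $1$-center relative to the lines $y=U$ and $y=L$, and in each case discard one point from at least $n/6$ pairs (using $\wp^U_{1/3}$, $\wp^L_{1/3}$, or $\wp^U_{2/3}\cap\wp^L_{2/3}$ respectively, the latter bounded via inclusion--exclusion). Your write-up is in fact a bit more explicit than the paper's---you spell out the lighter/heavier domination pattern and justify that deleting dominated points does not move $y^\ast$---but the underlying idea and the counting are identical.
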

\begin{proof}
Consider the following three possibilities.
\begin{enumerate}
\item[(i)]
The weighted 1-center lies above $U$.
\item[(ii)]
The weighted 1-center lies below $L$.
\item[(iii)]
The weighted 1-center lies between $U$ and $L$,
including $U$ and $L$.
\end{enumerate}

In case (i), there are two subcases,
which are shown in Fig.~\ref{fig:2points1}(a) and (b), respectively.
Since the center lies above $U$, 
we are interested in the upper envelope of the costs in the 
$y$-region given by $y > U$.
In the case shown in Fig.~\ref{fig:2points1}(a),
the costs of points $p$ and $q$ satisfy $d(y,p.y)w(p) <  d(y,q.y)w(q)$ for $y > U$.
Thus we can ignore $p$.
In the case shown in Fig.~\ref{fig:2points1}(b),
 the costs of points $p$ and $q$ satisfy
$d(y,p.y)w(p) >  d(y,q.y)w(q)$ for $y > U$.
Thus we can ignore $q$.
Since $|\wp^U_{1/3}|\geq n/6$,
in either case, one point from each such pair can be ignored,
i.e., 1/6 of the points in $P$ can be eliminated, because it cannot affect the weighted 1-center.
In case (ii) a symmetric argument proves that 1/6 of the points in $P$ can be discarded.

\begin{figure}[ht]
\centering
\subfigure[]{\includegraphics[height=1.5cm]{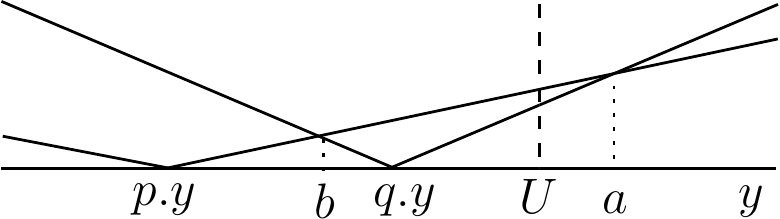}}
\hspace{2mm}
\subfigure[]{\includegraphics[height=1.5cm]{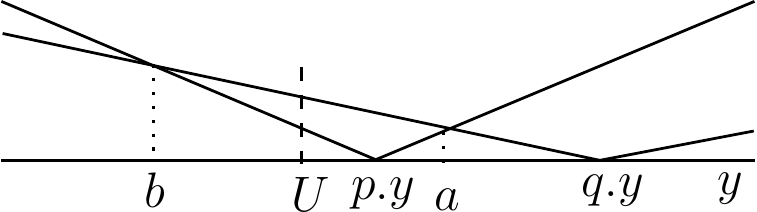}}
\caption{2/3 of upper bisectors are at $y> U$.}
\label{fig:2points3}
\end{figure}

In case (iii) see Fig.~\ref{fig:2points3}.
The costs of each pair in $\wp^U_{2/3}$ (of the $2n/3$ pairs) as functions of $y$ intersect at most once at $y<U$.
The cost functions of each pair in $\wp^L_{2/3}$ ($2n/3$ pairs) intersect at most once at $y>L$.
Therefore, $\wp^U_{2/3}\cap \wp^L_{2/3}$  ($n/3$ pairs must be common to both,)
i.e., both intersections of each such pair occur outside of the $y$-interval $[L,U]$.
$|\wp^U_{2/3}\cap \wp^L_{2/3}| = n/2 \times 1/3 = n/6$.
This implies that their cost functions do not intersect within in $[L,U]$,
i.e., one of each pair lies above that of the other in $[L,U]$,
and can be discarded.
\end{proof}

\subsection{Optimal 1-step function}\label{sec:1step}
This problem is equivalent to finding the weighted center for $n$ points on a line.
We pretend that all the points had the same $x$-coordinate.
Then the problem becomes that of finding a weighted 1-center on a line,
i.e., on the $y$-axis.
This can be solved in linear time using Megiddo's {\em prune-and-search} method \cite{bhattacharya2007,chen2015a,megiddo1983a}.
In \cite{megiddo1983b} Megiddo presents a linear time algorithm in the case where the
points are unweighted. 
For the weighted case we now present a more technical algorithm that we can apply
later to solve other related problems.
The following algorithm uses a parameter $c$ which is a small integer constant.

\begin{algorithm}{\rm :} {\tt 1-Step}$(P)$
\begin{enumerate}
\item
Pair up the points of $P$ arbitrarily.
\item
For each such pair $(p,q)$ determine their horizontal bisector lines. 
\item
Determine a horizontal line, $y=U$ such that $|\wp^U_{2/3}|\geq  n/3$
and $|\wp^U_{1/3}|\geq n/6$ hold.
\item
Determine a horizontal line, $y=L$ such that $L$ $|\wp^L_{2/3}|\geq  n/3$ and
$|\wp^L_{1/3}|\geq n/6$ hold.
\item
Determine the critical points for $U$ and $L$.
\item
If there exist critical points for $U$ on both sides of (above and below) $U$, 
then $y=U$ defines an optimal 1-step function, $F^*_1(x)$; Stop. 
Otherwise, let $s_U$ (higher or lower than $U$) be the side of $U$ on which the critical point
lies.
\item
If there exist critical points for $L$ on both sides of $L$,
$y=L$ defines $F^*_1(x)$; Stop.
Otherwise, let $s_L$ (higher or lower than $L$) be the side of $L$ on which the critical point
lies.
\item
Based on $s_U$ and $s_L$, discard 1/6 of the points from $P$,
based on Lemma~\ref{lem:one6th}.
\item
If the size of the reduced set $P$ is greater than constant $c$, 
repeat this algorithm from the beginning with the reduced set $P$.
Otherwise, determine $F^*_1(x)$ using any known method
(which runs in constant time).
\end{enumerate}
\end{algorithm}

\begin{lemma}\label{lem:1step}
An optimal 1-step function $F^*_1(x)$ can be found in linear time.
\end{lemma}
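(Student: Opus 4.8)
The plan is to show that algorithm {\tt 1-Step}$(P)$ is correct and that each of its iterations runs in $O(n)$ time while discarding a constant fraction of the points; a geometric-series argument then delivers the linear bound. The whole analysis rests on one structural fact, which I would establish first: the map $y\mapsto D(P,y)$, i.e. the upper envelope of the individual cost functions $D(p_i,y)=|y-p_i.y|\,w(p_i)$, is convex and piecewise linear, since each $D(p_i,\cdot)$ is a convex V-shaped function and the pointwise maximum of convex functions is convex. Its minimizer is exactly the weighted 1-center defining $F^*_1(x)$.

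Convexity yields a clean optimality certificate phrased in terms of critical points, which is precisely what Steps 5--7 test. At a candidate line $y=c$, if $c$ has critical points both above and below it, then perturbing $c$ in either direction strictly increases the cost of one of these critical points, so $c$ already minimizes $D(P,y)$; this justifies the \textbf{Stop} branches of Steps 6 and 7. If instead every critical point for $c$ lies on a single side, then that side contains the minimizer, because raising $c$ toward a critical point lying above it decreases that point's cost (and symmetrically below). This is exactly the directional information recorded in $s_U$ and $s_L$.

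Next I would verify the pruning in Steps 8--9. The pair $(s_U,s_L)$ selects which of the three cases of Lemma~\ref{lem:one6th} applies: if $s_U$ locates the optimum above $U$ we are in case (i); if $s_L$ locates it below $L$ we are in case (ii); the remaining combination forces the optimum to lie between $L$ and $U$, which is case (iii). In each case Lemma~\ref{lem:one6th} lets us delete $n/6$ points without moving the weighted 1-center, so the reduced instance has the \emph{same} optimum and the recursive call in Step 9 returns the global optimizer rather than a merely locally consistent line.

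Finally I would bound the running time. Every iteration uses only linear-time primitives: the arbitrary pairing and the computation of the two bisector lines per pair are $O(n)$; the lines $y=U$ and $y=L$ are order statistics among the $\lceil n/2\rceil$ bisector values and are found by linear-time selection; the critical points for $U$ and $L$ are obtained by a single scan of all weighted distances; and the deletion is one more pass. Since a $1/6$ fraction of the points is removed each round, the total cost satisfies $T(n)\le T(5n/6)+O(n)$, which solves to $T(n)=O(n)$, the base case $|P|\le c$ being handled in constant time. I expect the timing to be routine; the main obstacle is the correctness of the pruning, namely arguing rigorously that the two independent one-sided tests at $U$ and $L$ compose to pick out the intended case of Lemma~\ref{lem:one6th}, and that ``without affecting the weighted 1-center'' for the discarded points truly means the recursion converges to the global optimum.
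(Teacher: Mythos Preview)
Your proposal is correct and follows exactly the paper's approach: analyze Algorithm {\tt 1-Step}$(P)$, invoke Lemma~\ref{lem:one6th} to guarantee that each round discards $n/6$ points, and solve the recurrence $T(n)\le T(5n/6)+O(n)=O(n)$. The paper's own proof is far terser---it records only that recurrence and its solution---so the convexity/critical-point correctness argument you supply is additional (and sound) detail rather than a different method.
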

\begin{proof}
The recurrence relation for the running time $T(n)$ of {\tt 1-Step}$(P)$ for general $n$ is 
$T(n) \leq T(n-n/6) + O(n)$,
which yields $T(n) = O(n)$.
\end{proof}

\section{Anchored $2$-step function problem}\label{sec:cond2step}
In general, we denote an optimal $k$-step function by $F^*_k(x)$
and its $i^{th}$ segment by $s^*_i$. 
Later, we need to constrain the first and/or the last step of a step function to be
at a specified height.
A $k$-step function is said to be {\em left-anchored} (resp. {\em right-anchored}),
if $s_1.y$ (resp.  $s_k.y$) is assigned a specified value,
and is denoted by $^{\downarrow}\!F_k(x)$ (resp. $F_k^{\downarrow}(x)$).
The {\em anchored $k$-step function} problem is defined as follows.
Given a set $P$ of points and two $y$-values $a$ and $b$,
determine the optimal $k$-step function $^{\downarrow}\!F^*_k(x)$  (resp. $F_k^{\downarrow *}(x)$)
that is left-anchored (resp. right-anchored) at 
$a$ (resp. $b$) such that cost $D(P, ^{\downarrow}\!\!F^*_k(x))$ (resp. $D(P, F_k^{\downarrow *}(x))$)
 is the smallest possible.
If a $k$-step function is both left- and right-anchored, 
it is said to be {\em doubly anchored} and is
denoted by $^{\downarrow}\!F_k^{\downarrow}(x)$.

\subsection{Doubly anchored 2-step function}
Suppose that segment $s_1$ (resp. $s_2$) is anchored at $a$ (resp. $b$).
See Fig.~\ref{fig:anchored2}(a).
\begin{figure}[ht]
\centering
\subfigure[]{\includegraphics[height=3cm]{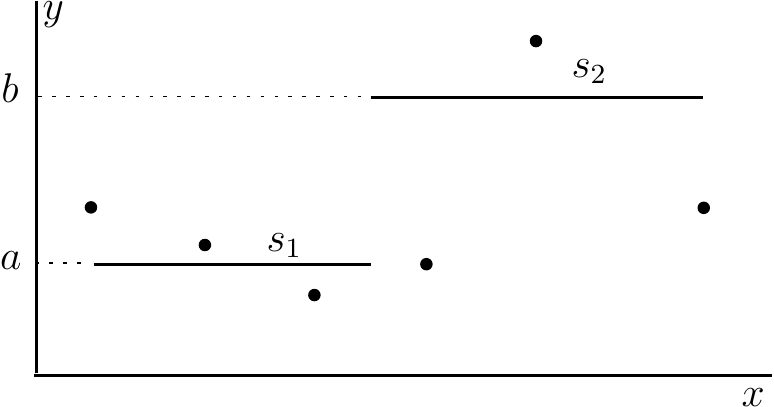}}
\hspace{4mm}
\subfigure[]{\includegraphics[height=3cm]{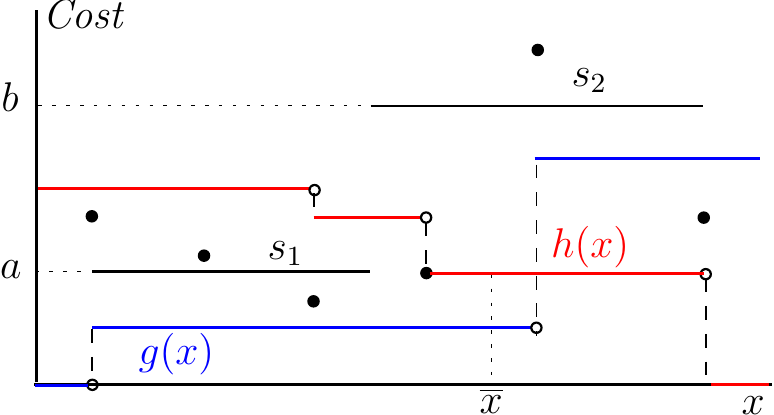}}
\caption{(a) $s_1.y=a$ and $s_2.y=b$;
(b) Monotone functions $g(x)$ (in blue) and $h(x)$ (in red).
}
\label{fig:anchored2}
\end{figure}
Let us define two functions $g(x)$ and $h(x)$ by
\begin{eqnarray}
g(x) &=& \max_{p.x\leq x} \{w(p)\cdot|p.y - a|~\mid p\in P\},\label{eqn:g}\\
h(x) &=&\max_{p.x>x} \{w(p)\cdot |p.y - b|~\mid p\in P\},\label{eqn:h}
\end{eqnarray}
where $g(x) =0$ for $x <p_1.x$ and $h(x) =0$ for $x >p_n.x$.
Intuitively, if we divide the points of $P$ at $x$ into two partitions $P_1$ and $P_2$,
then $g(x)$ (resp. $h(x)$) gives the cost of partition $P_1$ (resp. $P_2$).
See Fig.~\ref{fig:anchored2}(b).
Clearly the global cost for the entire $P$ is minimized for any $x$
at the lowest point in the upper envelope of $g(x)$ and $h(x)$,
which is named $\overline{x}$.
Since the points in $P$ are not sorted,
$g(x)$ and $h(x)$ are not available explicitly,
but we can compute $\overline{x}$ in linear time using the {\em prune-and-search} method,
taking advantage of the fact that $\max\{g(x),h(x)\}$ is unimodal.

\begin{algorithm}{\rm :} {\tt Doubly-Anch-2-Step}$(P,a,b)$\label{alg:double}
\begin{enumerate}
\item
Initialize $P'=P$.
\item
Find the point in $P'$ that has the median $x$-coordinate, $x_m$.
\item
Evaluate $g(x_m)$ (resp. $h(x_m)$) using (\ref{eqn:g}) (resp. (\ref{eqn:h})).
\item
If $g(x_m) = h(x_m)$ then $\overline{x}=x_m$. Stop.
\item
If $g(x_m) < h(x_m)$ (resp. $g(x_m) > h(x_m)$), 
i.e., $\overline{x}< x_m$ (resp. $\overline{x}< x_m$),
prune all the points $p$ with $p.x < x_m$ (resp. $p.x > x_m$),
 from $P'$,
remembering just the maximum cost.
\item
Stop when $|P'|=2$, and find the lowest point $\overline{x}$.
Otherwise, go to Step~2.
\end{enumerate}
\end{algorithm}

We have the following lemma.
\begin{lemma}\label{lem:doublyAnchored}
An optimal doubly anchored 2-step function
can be found in linear time.
\end{lemma}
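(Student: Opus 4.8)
The plan is to show that Algorithm~\ref{alg:double}, {\tt Doubly-Anch-2-Step}$(P,a,b)$, correctly computes $\overline{x}$ in linear time. The correctness hinges on two facts. First, the cost of the doubly anchored 2-step function, when the points are split at $x$, equals $\max\{g(x),h(x)\}$; this follows directly from the definitions in (\ref{eqn:g}) and (\ref{eqn:h}), since $g(x)$ captures the cost of the left partition anchored at $a$ and $h(x)$ the cost of the right partition anchored at $b$. The optimal split minimizes this maximum, and by the observation already made in the text, the minimizer $\overline{x}$ is the lowest point of the upper envelope of $g$ and $h$. Second, I would establish that $\max\{g(x),h(x)\}$ is unimodal in $x$: the key is that $g(x)$ is nondecreasing in $x$ (the max is taken over a growing set of points as $x$ increases) while $h(x)$ is nonincreasing. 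Consequently their upper envelope is unimodal, falling then rising, so the sign of $g(x_m)-h(x_m)$ at any probe point $x_m$ tells us on which side of $x_m$ the minimizer $\overline{x}$ must lie.

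Given unimodality, the pruning in Step~5 is justified: if $g(x_m)<h(x_m)$, then $\overline{x}$ lies to the right (where $h$ is still dominant and decreasing toward the crossover), so all points with $x$-coordinate on the losing side can be discarded while remembering only the maximum cost contributed by the pruned region. I would verify carefully the correctness of this bookkeeping, since after pruning we do not re-examine the discarded points but must still account for their cost contribution when evaluating $g$ or $h$ at subsequent probes. This ``remembering just the maximum cost'' step is what lets the algorithm maintain correctness despite discarding half the points at each iteration.

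For the running time, I would note that Step~2 finds the median $x$-coordinate in $O(|P'|)$ time via a linear-time selection algorithm, and evaluating $g(x_m)$ and $h(x_m)$ in Step~3 takes $O(|P'|)$ time by a single scan (combined with the remembered cost from previously pruned points). Since Step~5 discards at least half of the remaining points, the size of $P'$ is at least halved each iteration. The recurrence is therefore $T(n)\leq T(n/2)+O(n)$, which yields $T(n)=O(n)$, establishing the lemma.

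The main obstacle I anticipate is the careful handling of the pruning bookkeeping in conjunction with the unimodality argument. Specifically, the delicate point is ensuring that after discarding points on one side, the values of $g$ and $h$ at future probe points are still computed correctly: because $g$ aggregates a running maximum over all points to the left, pruning left-side points requires carrying forward their maximal cost contribution, and symmetrically for $h$. I would need to argue that this summarized ``remembered maximum'' is exactly the correct boundary contribution for every subsequent evaluation, so that the unimodal search converges to the true global minimizer $\overline{x}$ rather than a spurious one induced by lost information.
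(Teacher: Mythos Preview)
Your proposal is correct and follows essentially the same approach as the paper: analyze Algorithm~\ref{alg:double} and observe that each iteration costs $O(|P'|)$ while $|P'|$ halves, yielding the recurrence $T(n)\le T(n/2)+O(n)=O(n)$. Your treatment is in fact more thorough than the paper's, which omits the correctness argument (unimodality of $\max\{g,h\}$ and the bookkeeping for the remembered maximum) and records only the running-time recurrence.
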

\begin{proof}
Steps~2 and 3 of Algorithm {\tt Doubly-Anch-2-Step}$(P,a,b)$ can be carried out in linear time.
Since Step~4 cuts the size of $P'$ in half every time, Step~2 is entered $O(\log n)$ times.
Therefore the total time is $O(n)$.
\end{proof}

\subsection{Left- or right-anchored 2-step function}
Without loss of generality, we discuss only a left-anchored 2-step function. 
Given an anchor value $a$,
we want to determine the optimal 2-step function with the constraint
that $s^*_1.y=a$, denoted by $^{\downarrow}\!F^*_2(x)$.
See Fig.~\ref{fig:anchored2}(a).
In this case, $b$ in (\ref{eqn:h}) is not given; 
we need to find the optimal value for it.
But assume for now that $b$ is also given,
and execute {\tt Doubly-Anch-2-Step}$(P,a,b)$.
From the solution that it yields,
can we find the direction in which to move $b$ to find the optimal
left-anchored 2-step function?
\begin{lemma}
Let $P_1$ (resp. $P_2$) be the left (resp right) partition of $P$ generated by {\tt Doubly-Anch-2-Step}$(P,a,b)$
such that  $s_1.y=a$ (resp. $s_2.y=b$), where $a<b$ without loss of generality.
Assume that $P_1$ is maximal in the sense that the boundary between $P_1$ and $P_2$
cannot be moved to the right without increasing the cost of the solution.
\begin{enumerate}
\item[(a)]
If $D(P_1, s_1) \geq D(P_2, s_2)$ then the optimal right-anchor cannot lie above $y=b$.
\item[(b)]
If $D(P_1, s_1) < D(P_2, s_2)$ and there is a critical point in $P_2$ for $s_2.y=b$ above $y=b$,
then the optimal right-anchor cannot lie below $y=b$.
\item[(c)]
If $D(P_1, s_1) < D(P_2, s_2)$ and there is a critical point in $P_2$ for $s_2.y=b$ below $y=b$,
then the optimal right-anchor cannot lie above $y=b$.
\end{enumerate}
\end{lemma}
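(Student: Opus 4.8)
The plan is to treat the right‑anchor height $b$ as the single free parameter and to show that each case pins down the side of the current $b$ on which the optimal height can be assumed to lie. For a fixed $b$, let $C(b)$ denote the value returned by {\tt Doubly-Anch-2-Step}$(P,a,b)$, i.e. $C(b)=\min_x\max\{g(x),h(x)\}$ with $g$ as in (\ref{eqn:g}) (independent of $b$) and $h$ as in (\ref{eqn:h}) (the right cost at anchor $b$); the left‑anchored optimum is $\min_b C(b)$. Each of the three assertions is equivalent to saying that no $b'$ strictly on the forbidden side beats $C(b)$, so I would prove in every case that $C(b')\ge C(b)$ there, whence the minimizer lies on the allowed side. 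Two structural facts drive the argument: (S1) $C(b)$ is the minimum over \emph{all} boundaries at anchor $b$, so \emph{any} boundary evaluated at anchor $b$ yields cost $\ge C(b)$; and (S2) moving the boundary to the right shrinks $P_2$ while moving it left enlarges $P_2$, so the right cost at a \emph{fixed} anchor is monotone in the boundary. Notably this avoids having to prove global unimodality of $C(b)$.

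For the two clean cases I would track a single witness point. In case (b) the witness is the critical point $p^*$ of $P_2$ with $p^*.y>b$, so $w(p^*)(p^*.y-b)=C(b)$. For any $b'<b$, wherever $p^*$ lands in the $b'$‑solution I get a contradiction: on the right its distance to the lower line $b'$ is even larger, so its cost exceeds $C(b)$; on the left its distance to $a$ is $p^*.y-a>p^*.y-b$ (as $a<b$), so the left cost exceeds $C(b)$. Hence $C(b')>C(b)$ and $b^*\ge b$. In case (c) the witness is the below‑$b$ critical point $p^*$ ($p^*.y<b$). For $b'>b$: if $p^*$ stays in the right partition, its distance to the higher line $b'$ grows and its cost exceeds $C(b)$; otherwise $p^*$ moves left, and since it originally lay in $P_2$ this forces the boundary strictly to the right, so by (S2) the new $P_2'\subseteq P_2$ has right cost at anchor $b$ at most $C(b)$, whence (S1) forces the \emph{left} cost of the new boundary to be $\ge C(b)$, giving $C(b')\ge C(b)$ and $b^*\le b$.

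Case (a) is the one I would handle last and most carefully. Let $q^*$ be the critical point of $P_1$, so $w(q^*)|q^*.y-a|=C(b)$. If $q^*$ stays on the left in the $b'$‑solution (for $b'>b$), the left cost is $\ge C(b)$ and we are done. Otherwise $q^*$ moves to the right, which (since $q^*$ was on the left) forces the boundary strictly left, so by (S2) $P_1'\subseteq P_1$ and the new left cost $g'\le C(b)$; then (S1) produces a witness $r\in P_2'$ with $w(r)|r.y-b|\ge C(b)$. If $r.y<b$, raising the anchor to $b'>b$ strictly increases $r$'s cost, so $C(b')>C(b)$. If $r.y>b$, then under \emph{strict} left dominance $r$ cannot have been in the original $P_2$ (every point of which costs $<C(b)$ at anchor $b$), so $r$ was offloaded from the original left part; but for an above‑$a$ point the distance to $a$ strictly exceeds the distance to the higher line $b$, giving $w(r)|r.y-b|<w(r)(r.y-a)\le C(b)$ and contradicting the witness inequality. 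Thus the $r.y>b$ branch is vacuous and $C(b')\ge C(b)$, i.e. $b^*\le b$.

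The main obstacle is precisely the $r.y>b$ branch of case (a): it closes only because left dominance is \emph{strict}. At an exact tie $D(P_1,s_1)=D(P_2,s_2)$ with an above‑$b$ critical point in $P_2$, that point becomes cheaper as $b$ grows and can genuinely lower the cost, so the delicate issue is the boundary between cases (a) and (b). I would resolve it by invoking the paper's general‑position assumption (no two points share a coordinate), which lets me read (a) with strict dominance and fold the equality‑with‑above‑$b$‑critical configuration into case (b) (where the conclusion $b^*\ge b$ is the correct one). The remaining bookkeeping---empty partitions and the forced membership of $p_1$ in the left step---is routine and handled separately, after which the directional tests feed a prune‑and‑search over $b$ exactly as {\tt Doubly-Anch-2-Step} searches over the boundary.
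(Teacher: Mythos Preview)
Your approach differs from the paper's. The paper argues operationally: in case~(a) it splits on whether $P_1$'s critical point lies above or below $a$, then reasons about what happens when the boundary is slid left or right (using, in particular, that with $P_1$ maximal the leftmost point of $P_2$ forces the cost up when absorbed into $P_1$); cases~(b) and~(c) are dispatched in one line each. You instead treat $C(b)=\min_x\max\{g(x),h_b(x)\}$ as a function of $b$ and track a witness point across the $b'$-solution. Your route is more systematic and makes the role of (S1)/(S2) explicit, and you correctly isolate the equality subtlety in~(a) that the paper's terse argument does not address.

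There is, however, a small logical slip in your case~(c). From (S2) you get $h_b(x')\le C(b)$, and (S1) gives $\max\{g(x'),h_b(x')\}\ge C(b)$; you then conclude ``(S1) forces the left cost of the new boundary to be $\ge C(b)$.'' That inference is only valid when $h_b(x')<C(b)$ strictly; if some other critical point of the original $P_2$ survives in $P_2'$, you could have $h_b(x')=C(b)$ and (S1) tells you nothing about $g(x')$. The fix is exactly the hypothesis you never invoke in~(c): \emph{maximality of $P_1$}. Since $p^*$ moved left, the boundary moved strictly right, so the original leftmost point $q$ of $P_2$ now lies in $P_1'$; maximality says absorbing $q$ into $P_1$ already pushes the cost above $C(b)$, and because $h_b$ is non-increasing and $D(P_1,s_1)<C(b)$, that increase must come from $g$, i.e.\ $w(q)\,|q.y-a|>C(b)$, hence $g(x')>C(b)$. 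With this one-line patch your argument for~(c) closes, and your handling of~(a) and~(b) is fine as written.
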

\begin{proof}
(a) Assume first that the critical point $p_1$ for $s_1$ lies below $s_1$ ($y=a$).
Then we cannot reduce the cost by changing the value of $b$.
Therefore, assume that $p_1$ lies above $s_1$.
By the definition of $\{P_1,P_2\}$, the leftmost point in $P_2$ lies above $y=b$,
and moving the boundary between $P_1$ and $P_2$ to the right increases the cost,
which is due to the weighted distance between $s_1$ and the new point in $P_1$,
and this increase is independent of the value of $b$.
Moving this boundary to the left cannot decrease the cost,
until $p_1$ becomes a part of $P_2$,
and even then a decrease is not possible unless $b$ is made smaller.
Otherwise, $\{P_1,P_2\}$ wouldn't be optimal with the current $b$.

(b) We know that moving the boundary between $P_1$ and $P_2$ to the right increases the cost
 if $b$ is kept at the same value.
The cost increases if $b$ is made smaller.

(c) Symmetric to Case (b).
\end{proof}

\begin{figure}[ht]
\centering
\includegraphics[height=4cm]{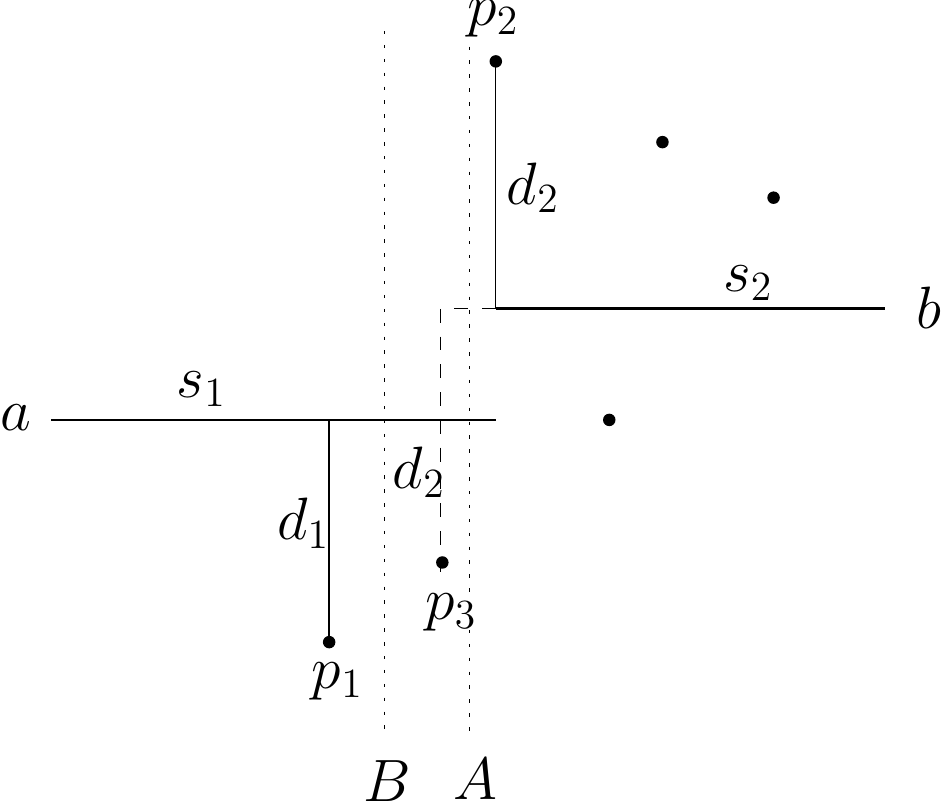}
\hspace{2mm}
\caption{An example for {\tt $l$-Anch-2-Step}$(P,a)$.
}
\label{fig:leftAnc}
\end{figure}
\begin{example}
In Fig.~\ref{fig:leftAnc},
assume that $d_2$ is slightly larger than $d_1$.
We have a doubly anchored solution with the minimum cost (weighted distance)
equal to $d_2$.
When the boundary between $P_1$ and $P_2$ is at $A$,
we can reduce the cost of the optimal solution by moving $b$ up.
We cannot do so if the boundary is at $B$,
because $D(p_3, s_2)$ would increase.
This is why we maximize $P_1$ in Step 5(b) of Algorithm~\ref{alg:cond2}
presented below.
\QED
\end{example}

To make use of the prune-and-search method,
we want to find the big partition (defined in Sec.~\ref{sec:model}),
$P_1$ or $P_2$, that is spanned by one segment of $^{\downarrow}\!F^*_2(x)$.
\hide{
\begin{procedure} {\tt Big2-$l$-Anc}$(P,a)$\label{proc:big2}
\begin{enumerate}
\item
Divide $P$ into left partition $P_1$ and right partition $P_2$,
whose sizes differ by at most one.\footnote{As before,
we assume that the points have different $y$-coordinates.
}
\item
Let $s_1$ be the segment with $s_1.y=a$ spanning $P_1$,
and let $s_2$ be the 1-step (optimal) solution for $P_2$.
\item
If $D(P_1, s_1) \leq D(P_2, s_2)$ (resp. $D(P_1, s_1) > D(P_2, s_2)$)
then $P_1$ (resp. $P_2$) is the big partition.
\end{enumerate}
\end{procedure}
}
If $P_1$ is the big partition, we can eliminate all the points belonging to it,
without affecting $^{\downarrow}\!F^*_2(x)$ that we will find.
See Step~4 of the Algorithm~\ref{alg:cond2} given below.
We then repeat the process with the reduced set $P$.
If $P_2$ is the big partition, on the other hand, we need to do more work,
similar to what we did to find an optimal 1-step function.
Namely,
we determine values $U$ and $L$ for $P_2$ by executing Algorithm {\tt 1-Step}$(P_2)$.
We then find a doubly anchored 2-step solution for $P$ with left anchor $a$
and right anchor $U$.

\begin{algorithm}{\rm :} {\tt $l$-Anch-2-Step}$(P,a)$\label{alg:cond2}
\begin{enumerate}
\item
Divide $P$ into left partition $P_1$ and right partition $P_2$,
whose sizes differ by at most one.\footnote{As before,
we assume that the points have different $y$-coordinates.
}
\item
Let $s_1$ be the segment with $s_1.y=a$ spanning $P_1$,
and let $s_2$ be the 1-step (optimal) solution for $P_2$.\footnote{
Segment $s_2$ can be found in $O(|P_2|)$ time by Lemma~\ref{lem:1step}.}
\item
If $D(P_1, s_1) = D(P_2, s_2)$ then
output $\{s_1,s_2\}$, which defines $^{\downarrow}\!F^*_2(x)$.
Stop.
\item
If $D(P_1, s_1) < D(P_2, s_2)$, remove from $P$ the points of $P_1$,
except the critical point for $s_1$.
Go to Step~6.
\item
If $D(P_1, s_1) > D(P_2, s_2)$ then carry out the following steps.
\begin{enumerate}
\item
Determine points $U$ and $L$ for $P_2$ as described in Algorithm {\tt 1-Step}$(P)$.
\item
Execute {\tt Doubly-Anch-2-Step}$(P,a,U)$,
and find the solution whose left partition is maximal.
Repeat it with right anchor $L$.
\item
Eliminate 1/6 of the points of $P_2$ from $P$, based on the two solutions
(as in Steps 6--8 of Algorithm {\tt 1-Step}$(P)$.)
\end{enumerate}
\item
If $|P| > c$ (a small constant),
repeat Steps~1 to 4.
Otherwise, optimally solve the problem in constant time, using a known method.
\end{enumerate}
\end{algorithm}
In the example in Fig.~\ref{fig:leftAnc}, 
assume that $b$ is not given,
and $s_2$ is determined by Step~2.
Then we have $D(P_1, s_1) > D(P_2, s_2)$,
and Step~5 applies.
According to Step~5(a), we determine $U$.
We then find the doubly anchored solution with the right anchor set to $b=U$.

\begin{lemma}
Algorithm {\tt $l$-Anch-2-Step}$(P,a)$ computes $^{\downarrow}\!F^*_2(x)$ correctly,
and runs in linear time.
\end{lemma}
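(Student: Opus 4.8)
The plan is to exploit the fact that, once the boundary between the two steps is fixed at an $x$-value $t$, the cost of the left (anchored) step is monotone nondecreasing in $t$ while the cost of the right step, taken to be its optimal $1$-step value, is monotone nonincreasing in $t$. Writing $\ell(t)=D(\{p\in P\mid p.x\le t\},s_1)$ with $s_1.y=a$, and $r(t)=D(\{p\in P\mid p.x>t\},s_2^*(t))$ where $s_2^*(t)$ is the optimal $1$-step function for the right partition, the best left-anchored $2$-step cost with boundary $t$ is $\max\{\ell(t),r(t)\}$, and the global optimum $D(P,{}^{\downarrow}\!F^*_2(x))$ equals $\min_t\max\{\ell(t),r(t)\}$. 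Since $\ell$ is nondecreasing and $r$ nonincreasing, this minimum is attained at the discrete boundary nearest the crossover of the two curves. First I would use this to justify Steps~3 and~4: Step~2 evaluates $\ell$ and $r$ at the median boundary $t_m$. If $\ell(t_m)=r(t_m)$, then $t_m$ is the crossover and $\{s_1,s_2\}$ is optimal (Step~3). If $\ell(t_m)<r(t_m)$, then because $\ell$ can only grow and $r$ only shrink, the crossover lies at some $t^*\ge t_m$; hence every point of the current left half $P_1$ belongs to the left step of the optimum and affects the cost only through its fixed distance to $y=a$. Keeping just the critical point of $P_1$ for $s_1$ therefore preserves $\ell(\cdot)$, hence the entire cost function, while discarding $|P_1|-1\approx n/2$ points (Step~4). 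Correctness overall then follows because every pruning branch preserves the optimum and the base case $|P|\le c$ is solved exactly.

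Next I would treat Step~5, where $\ell(t_m)>r(t_m)$. Now the crossover lies at some $t^*\le t_m$, so the optimal right partition $P_2^*$ contains the whole current right half $P_2$; in particular $|P_2^*|\ge|P_2|\ge\lfloor n/2\rfloor$, so $P_2$ sits inside a big partition and the optimal right height $s_2^*.y$ is a weighted $1$-center of $P_2^*$ in the $y$-direction. The objective is to prune a constant fraction of $P_2$ for that $1$-center computation without knowing $P_2^*$ exactly. I would compute the lines $U$ and $L$ for $P_2$ exactly as in {\tt 1-Step}, then run {\tt Doubly-Anch-2-Step}$(P,a,U)$ and {\tt Doubly-Anch-2-Step}$(P,a,L)$, each time maximizing the left partition as prescribed in Step~5(b). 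By the three-case direction lemma stated just above (which analyzes where the optimal right anchor may lie relative to a fixed $b$), these two doubly anchored solutions reveal whether $s_2^*.y$ lies above, below, or between $U$ and $L$. Feeding this into the case analysis of Lemma~\ref{lem:one6th} identifies, among the $\lceil|P_2|/2\rceil$ bisector pairs of $P_2$, at least $|P_2|/6\approx n/12$ points that cannot be critical for $s_2^*.y$ and may be deleted (Step~5(c)).

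The step I expect to be the main obstacle is justifying this last deletion rigorously, since $U$, $L$, and the bisector pairs are built from $P_2$ alone, whereas $s_2^*.y$ is the $1$-center of the possibly larger set $P_2^*$. The key fact to nail down is that the bisector argument of Lemma~\ref{lem:one6th} is monotone under adding points: a point discarded because its cost is dominated by its partner's on the relevant side of $U$ (or $L$) stays dominated there regardless of which extra boundary points of $P_2^*\setminus P_2$ are present, because additional points can only raise the upper envelope and never resurrect a dominated one. I would also verify that the two doubly anchored runs, with the left partition maximized, correctly locate $s_2^*.y$ relative to $U$ and $L$; this is precisely what the preceding direction lemma was set up to guarantee, so the remaining work is to check that its hypotheses (maximality of $P_1$, and the existence and side of the critical point in $P_2$) hold for the values $b=U$ and $b=L$.

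Finally, for the running time I would bound one iteration. Median selection, evaluating $\ell$ and $r$ (the latter via the linear-time $1$-step solver of Lemma~\ref{lem:1step}), computing $U$ and $L$, and the two calls to {\tt Doubly-Anch-2-Step} (linear by Lemma~\ref{lem:doublyAnchored}) are each $O(n)$, so an iteration costs $O(n)$. Step~3 terminates; Step~4 removes about $n/2$ points; Step~5 removes about $n/12$ points. Hence every non-terminating iteration discards at least a $1/12$ fraction, yielding the recurrence $T(n)\le T(11n/12)+O(n)$, whose solution is $T(n)=O(n)$. Combining the correctness of the three branches with the exact base case and this time bound establishes the lemma.
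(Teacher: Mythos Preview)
Your argument is correct and follows the same prune-and-search strategy as the paper: the monotone crossover of $\ell(t)$ and $r(t)$ justifies Steps~3--4, the direction lemma combined with Lemma~\ref{lem:one6th} justifies Step~5, and the $1/12$-fraction elimination gives the linear recurrence. In fact you are more careful than the paper's own (very terse) proof, explicitly handling the subtlety that $U$, $L$, and the bisector pairs are built from $P_2$ while $s_2^*.y$ is the center of the larger set $P_2^*$---your observation that domination between paired cost functions is preserved under adding points is exactly what is needed and is left implicit in the paper.
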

\begin{proof}
Step~3 is obviously correct.
If $D(P_1, s_1) < D(P_2, s_2)$ holds in Step~4,
then the first partition of $^{\downarrow}\!F^*_2(x)$ contains $P_1$.
We need to keep the critical point for $a$,
but all other points of $P_1$ can be ignored from now on
because $P_1$ will expand.
If $D(P_1, s_1) > D(P_2, s_2)$ holds in Step~5,
then the first partition of $^{\downarrow}\!F^*_2(x)$ is contained in $P_1$.

Each iteration of Steps 3 and 4 will eliminate at least $1/2\times1/6=1/12$ of the points of $P$.
Such an iteration takes linear time in the input size. 
The total time needed for all the iterations is therefore linear.
\end{proof}

\section{$k$-step function}\label{sec:kstep}
\subsection{Approach}
To design a recursive algorithm, assume that for any set of points $Q\subset P$,
we can find the optimal $(j-1)$-step function and the optimal left- and right-anchored $j$-step function 
for any $2\leq j < k$ in $O(|Q|)$ time,
where $k$ is a constant .
We have shown that this is true for $k=2$ in the previous two sections.
So the basis of induction holds.

Given an optimal $k$-step function $F^*_k(x)$, for each $i~(1\leq i \leq k)$,
let $P^*_i$ be the set of points vertically closest to segment $s^*_i$.
By definition, the partition
$\{P^*_i \mid i = 1, 2, \ldots, k\}$ satisfies the contiguity condition.
It is easy to see that
for each segment $s^*_i$, there are (local) critical points with respect to $s^*_i$,
lying on the opposite sides of $s^*_i$.

In finding an optimal $k$-step function,
we first identify a big partition that will be spanned by a segment in
an optimal solution.
By Lemma~\ref{lem:big},
such a big partition always exists.
Our objective is to eliminate a constant fraction of the points in a big partition.
This will guarantee that a constant fraction of the input set is eliminated when $k$ is a fixed constant.
The points in the big partition other than two critical points are ``useless''
and can be eliminated from further considerations.\footnote{Note that there may
be more than two critical points in which case all but two are ``useless.''}
This elimination process is repeated until the problem size gets small enough
to be solved by an exhaustive method in constant time.

\subsection{Feasibility test}\label{sec:feasibility}
Given a weighted distance (=cost) $D$,
a point set $P$ is said to be $D$-{\em feasible} if there exists a $k$-step function
$F_k(x)$ such that $D(P,F_k(x)) \leq D$. 
To test $D$-feasibility
we first try to identify the first segment $s_1$ of a possible $k$-step function $F_k(x)$.
To this end we compute the median $m$ of $\{p_i.x\mid i = 1, 2,\ldots, n\}$ in $O(n)$ time,
and divide $P$ into two parts $P_1 = \{p_i \mid p_i.x \leq m\}$ and $P_2 = \{p_i \mid p_i.x > m\}$,
which also takes $O(n)$ time.
Note that $|P_1| \leq \lceil |P|/2\rceil$ and $|P_2| \leq \lceil |P|/2\rceil$ hold.
We then find the intersection $I$ of the $y$-intervals in $\{|p_i.y-y| \leq D \mid p_i \in P_1\}$.
Assuming that $P$ is $D$-feasible,
then we have two cases.

Case (a): [$|I|=\emptyset$] $s_1$ ends at some point $p_j \in P_1$.
Throw away all the points in $P_2$ and look for the longest $s_1$ limited by cost $D$,
considering only the points in $P_1$ from the left.

Case (b): [$|I|\not=\emptyset$] $s_1$ may end at some point $p_j \in P_2$.
Throw away all the points in $P_1$
and look for the longest $s_1$, using $I$ and the points in $P_2$ from the left.

Clearly,
we can find the longest $s_1$ in $O(n)$ time.
Remove the points spanned by $s_1$ from $P$,
and find $s_2$ in $O(n)$ time, and so on.
Since we are done after finding $k$ steps $\{s_1, \ldots, s_k\}$,
it takes $O(kn)$ time.

\begin{lemma}\label{lem:feasibility}
We can test $D$-feasibility in $O(kn)$ time.
\QED
\end{lemma}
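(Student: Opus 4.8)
The plan is to establish the correctness of the feasibility test and its $O(kn)$ running time separately, treating the left-to-right greedy construction of the steps as the core object. For correctness I would argue that the greedy strategy of repeatedly choosing the \emph{widest} step consistent with the cost bound $D$ minimizes the number of steps, via a standard exchange argument. Suppose $P$ is $D$-feasible and fix any feasible $k$-step function. Let $s_1$ be the greedy first step, spanning the longest prefix (in $x$-order) of $P$ that admits a common height $y$ with $|p.y - y|\,w(p)\le D$ for every spanned $p$. Since the greedy $s_1$ covers at least as many leftmost points as the first step of the fixed solution, I can replace the latter by $s_1$ without increasing any cost, and by induction on the remaining points the greedy never uses more steps than a feasible solution. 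Hence $P$ is $D$-feasible iff the greedy covers all of $P$ with at most $k$ steps; when it uses fewer than $k$ steps, splitting any step at its own height yields exactly $k$, so ``$\le k$ steps'' and ``exactly $k$ steps'' coincide for the purpose of this test.

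Next I would show that each greedy step is found in $O(n)$ time by the median-driven prune-and-search described before the lemma. For a point $p$, the admissible heights form the interval $[\,p.y - D/w(p),\, p.y + D/w(p)\,]$, and a single step spans a set $Q$ exactly when $\bigcap_{p\in Q}[\,p.y - D/w(p),\, p.y + D/w(p)\,]$ is nonempty. I maintain a running intersection $I$ of the intervals of the points already committed to the current step. At each iteration I compute the median $x$-coordinate of the remaining candidate points, split them into a left half $P_1$ and a right half $P_2$, and test $I\cap\bigcap_{p\in P_1}[\,p.y - D/w(p),\, p.y + D/w(p)\,]$. If this is empty (Case (a)) the step cannot reach past $P_1$, so I discard $P_2$ and recurse on $P_1$ with $I$ unchanged; if it is nonempty (Case (b)) the step covers all of $P_1$, so I commit $P_1$ by updating $I$ and recurse on $P_2$. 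Each iteration does $O(\text{candidate size})$ work and halves the candidate set, so the longest step is identified in $O(n + n/2 + \cdots) = O(n)$ time.

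Finally, removing the points spanned by $s_1$ and repeating yields $s_2,\ldots,s_k$, each computed in $O(n)$ time over the (at most $n$) surviving points, for a total of $O(kn)$; if the points are exhausted within $k$ steps I report feasible, otherwise infeasible. This gives the claimed bound.

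The main obstacle I expect is the correctness of committing $P_1$ in Case (b) while carrying $I$ forward: I must verify that the running intersection faithfully encodes the joint height constraint of all committed points, so that the median recursion genuinely locates the true right endpoint of the widest step rather than an artifact of the halving. The exchange argument for greedy optimality is routine once the widest step is computed correctly, but pinning down that the prune-and-search returns exactly the longest feasible prefix — including the boundary behaviour of the half-open step intervals and the degenerate tie case where $I\cap\bigcap_{p\in P_1}[\,\cdot\,]$ collapses to a single height — is where the care is needed.
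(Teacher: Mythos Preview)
Your proposal is correct and follows essentially the same approach as the paper: the greedy left-to-right construction of the widest feasible step via median-based halving with a running interval intersection, repeated $k$ times for $O(kn)$ total. The paper actually states the lemma with no separate proof (the description preceding it \emph{is} the argument), so your write-up is more explicit---in particular the exchange argument for greedy correctness and the care about carrying $I$ forward are details the paper leaves implicit.
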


\subsection{Identifying a big partition}\label{sec:findBig}
\begin{lemma}\label{lem:big}
Let ${\cal P}=\{P_i\mid i=1,\ldots,k\}$ be any $k$-partition of $P$,
satisfying the contiguity condition,
such that the sizes of the partitions differ by no more than 1,
and let $\{P^*_i\mid i=1,\ldots,k\}$ be an optimal $k$-partition.
Then there exists an index $j$ such that $P_j$ is a big partition spanned by $s^*_j$.
\end{lemma}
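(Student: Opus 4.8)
The plan is to strip the geometry away and reduce the statement to a purely combinatorial fact about the two sets of partition boundaries, which I can then settle with a discrete intermediate-value argument.

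First I would exploit the contiguity condition to encode both partitions by their cut positions. Relabel the points as $p_{(1)},\ldots,p_{(n)}$ in increasing order of $x$-coordinate (possible since the $x$-coordinates are distinct). Because $\mathcal{P}$ and the optimal partition $\{P^*_i\}$ each satisfy contiguity, there are indices $0=a_0<a_1<\cdots<a_k=n$ with $P_j=\{p_{(a_{j-1}+1)},\ldots,p_{(a_j)}\}$, and likewise $0=a^*_0<a^*_1<\cdots<a^*_k=n$ with $P^*_j=\{p_{(a^*_{j-1}+1)},\ldots,p_{(a^*_j)}\}$. Note $a_0=a^*_0$ and $a_k=a^*_k$, since every $k$-step function satisfies $s_1^{(l)}.x=p_1.x$ and $s_k^{(r)}.x=p_n.x$. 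In this language, the assertion ``$P_j$ is spanned by $s^*_j$'' is exactly the inclusion $P_j\subseteq P^*_j$, which holds if and only if $a^*_{j-1}\le a_{j-1}$ and $a_j\le a^*_j$.

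Next I would introduce the discrepancies $d_j:=a_j-a^*_j$ for $0\le j\le k$, so that $P_j\subseteq P^*_j$ becomes the single sign pattern $d_{j-1}\ge 0$ and $d_j\le 0$. The endpoints give $d_0=0$ and $d_k=0$. Let $j$ be the smallest index in $\{1,\ldots,k\}$ with $d_j\le 0$; this exists since $d_k=0$. By minimality $d_{j-1}\ge 0$ (either $j=1$ with $d_0=0$, or $j>1$ with $d_{j-1}>0$), while $d_j\le 0$ by choice, so $P_j\subseteq P^*_j$. Finally, since the parts of $\mathcal{P}$ differ in size by at most one, each $|P_i|\ge\lfloor n/k\rfloor$, so this $P_j$ is in particular a big partition in the sense of Section~\ref{sec:model}, completing the argument.

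I expect the only delicate point to be the \emph{matching of indices}. A crude pigeonhole over the $k-1$ optimal cuts and the $k$ balanced blocks immediately shows that some block $P_j$ falls inside a single optimal part $P^*_{j'}$, but it gives no control over $j'$ and in particular does not force $j'=j$. The sign-change argument on the sequence $d_0,d_1,\ldots,d_k$ is precisely what pins the index down, so that is the step I would state most carefully; the translation to boundaries and the size bound are routine.
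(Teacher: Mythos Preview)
Your argument is correct and is essentially the same as the paper's: the paper also takes the smallest index $j$ with $s_j^{(r)}.x \le s_j^{*(r)}.x$ (your condition $d_j\le 0$ written in coordinates rather than sorted indices) and uses minimality to get the left-endpoint inequality, concluding $P_j\subseteq P^*_j$. Your write-up is somewhat more explicit in handling the endpoint cases $d_0=d_k=0$ and in noting that every balanced block already meets the $\lfloor n/k\rfloor$ size threshold, but the underlying idea is identical.
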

\begin{proof} 
Let $j$ be the smallest index such that $s^{(r)}_j.x \leq s^{*(r)}_j.x$.
Such an index must exists, because if $s^{(r)}_j.x > s^{*(r)}_j.x$
for all $1\leq j \leq k-1$ then $s^{(r)}_k.x = s^{*(r)}_j.x$.
We clearly have $s_j\subset s^*_j$,
which implies that $s^*_j$ spans $P_j$.
\end{proof}

Given a point set $P$ in the $x$-$y$ plane,
let ${\cal P}=\{P_i\mid i=1,\ldots,k\}$ be any $k$-partition of $P$,
satisfying the contiguity condition,
such that the sizes of the partitions differ by no more than 1.
The following procedure returns a big partition $P_j$ spanned by $s^*_j$,
whose existence was proved by Lemma~\ref{lem:big}.
Since $P=\cup\{P_i \mid P_i \in {\cal P}\}$,
$P$ is implicit in the input to the next procedure.

\begin{procedure}{\rm :} {\tt Big$({\cal P}, k)$}\label{proc:bigk}

\noindent
\begin{enumerate}
\item
Using Algorithm~{\tt 1-Step}$(P)$, compute the optimal 1-step function for $P_1$
and let $D_1$ be its cost for $P_1$.
If $P$ is not $D_1$-feasible (i.e., $D(P,F^*_{k}(x))>D_1$),
then return $P_1$ and stop.\footnote{There exists an optimal solution for $P$
 in which $s^*_1$ spans $P_1$.}
\item
Using Algorithm~{\tt 1-Step}$(P)$, compute the optimal 1-step function for $P_k$
and let $D'_k$ be its cost for $P_k$.
If $P$ is not $D'_k$-feasible (i.e., $D(P,F^*_{k}(x))>D'_k$),
then return $P_k$ and stop.
\item
Find an index $j~ (1 < j < k)$ such that for $D_{j-1}=D(\cup_{i=1}^{j-1} P_i, F^*_{j-1}(x))$
$P$ is $D_{j-1}$-feasible, 
and for $D_j=D(\cup_{i=1}^{j} P_i, F^*_j(x))$ $P$ is not $D_{j}$-feasible.\footnote{This means
that $D_{j-1}\geq D^*$ and $D_{j}< D^*$,
where $D^*$ is the cost of the optimal solution for $P$.
Unless $P^*_i =P_i$ for all $i$, such an index $j$ always exists.
[We should indicate why.]}
Return $P_j$ and stop.
\end{enumerate}
\end{procedure}

\begin{lemma}\label{lem:Bigiscorrect}
Procedure {\tt Big$({\cal P},k)$} is correct.
\end{lemma}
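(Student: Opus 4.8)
The plan is to reduce the correctness of {\tt Big} to a statement about where the segment boundaries of an optimal solution fall relative to the boundaries of the balanced partition ${\cal P}$. Write $Q_i=\cup_{l=1}^{i}P_l$, let $q_i$ be the $x$-coordinate of the right boundary of $Q_i$ (so $q_0=p_1.x$, $q_k=p_n.x$), and recall $D_i=D(Q_i,F^*_i(x))$ and $D^*=D(P,F^*_k(x))$. As in the proof of Lemma~\ref{lem:big}, a single optimal segment $s^*_j$ spans the block $P_j=Q_j\setminus Q_{j-1}$ exactly when the first $j-1$ optimal segments reach no farther than $q_{j-1}$, condition (A) $s^{*(l)}_j.x\le q_{j-1}$, and the first $j$ optimal segments reach at least $q_j$, condition (B) $s^{*(r)}_j.x\ge q_j$. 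So it suffices to show that the index returned by {\tt Big} admits an optimal solution satisfying both (A) and (B). I would first record a threshold lemma: running the greedy first-fit of the feasibility test of Lemma~\ref{lem:feasibility} with budget $D$, let $R_i(D)$ be the right end of its $i$-th segment. A standard exchange argument shows the greedy maximizes every $R_i(D)$, whence $R_i(D)\ge q_i\iff D\ge D_i$, and in particular $P$ is $D$-feasible $\iff R_k(D)\ge q_k\iff D\ge D^*$.

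For the endpoint rules (Steps 1 and 2) I would argue by modification. If $P$ is not $D_1$-feasible then $D_1<D^*$, and if $s^*_1$ does not already span $P_1$ then $P^*_1\subsetneq P_1$; replacing the first segment of an optimal solution by the optimal $1$-step function for $P_1$ (cost $D_1<D^*$, spanning all of $P_1$) and restricting the remaining $\le k-1$ segments to the suffix $P\setminus P_1\subseteq P\setminus P^*_1$ yields a $k$-step function of cost $\le\max(D_1,D^*)=D^*$. This modified solution is optimal and has $s^*_1$ spanning $P_1$, so returning $P_1$ is correct. Step 2 is symmetric from the right.

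For the middle rule (Step 3) I would use the threshold lemma together with a discrete intermediate-value argument. Since $P$ is $D^*$-feasible we have $R_k(D^*)\ge q_k$, while $R_0(D^*)=q_0$; let $j$ be the smallest index with $R_j(D^*)\ge q_j$, equivalently the smallest $j$ with $D_j\le D^*$. Minimality gives $R_{j-1}(D^*)<q_{j-1}$, so the greedy solution of budget $D^*$, which is itself optimal, satisfies (A) and (B) at $j$, and hence its $j$-th segment spans $P_j$. I would then match this $j$ with the feasibility crossing tested by the procedure: $R_{j-1}(D^*)<q_{j-1}$ means $D^*<D_{j-1}$, so $P$ is $D_{j-1}$-feasible, and $D_j\le D^*$ with $j$ minimal gives $D_j<D^*$, so $P$ is not $D_j$-feasible, which is exactly the condition in Step 3. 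The strict inequality $D^*<D_{j-1}$ also re-proves (A) intrinsically: were the first $j-1$ optimal segments to reach past $q_{j-1}$, they would cover a superset of $Q_{j-1}$ at cost $\le D^*$, forcing $D_{j-1}\le D^*$, a contradiction; and (B) follows from $D_j<D^*$ via the threshold lemma.

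I expect the main obstacle to be the tie cases where some $D_i$ equals $D^*$. Then the smallest crossing index need not satisfy $D_j<D^*$ strictly, the greedy $(j-1)$-st segment may overshoot $q_{j-1}$ so that (A) fails for the greedy solution, and in the extreme case $P^*_i=P_i$ for all $i$ no feasibility crossing exists at all (every block is already spanned, and {\tt Big} may return any $P_j$). I would dispose of these by an infinitesimal perturbation of the weights, or a lexicographic tie-break, that removes every equality $D_i=D^*$ without altering the combinatorial structure of the optimal partition, and then transfer the conclusion back by the modification argument used above. This is precisely the point that the footnote to Step 3 flags as still needing justification.
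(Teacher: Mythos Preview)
Your argument is correct and arrives at the same two--sided spanning conditions (your (A) and (B)) that the paper uses, but the technical route differs. The paper argues geometrically: it takes an optimal solution, \emph{stretches} the step $s^*_j$ as far as possible to both sides without raising the global cost, and then reads off $s^{*(l)}_j.x\le q_{j-1}$ from $D_{j-1}\ge D^*$ and $s^{*(r)}_j.x\ge q_j$ from the observation that the optimal $(k-j)$-step function on the suffix $\cup_{i>j}P_i$ must cost more than $D_j$ (else $P$ would be $D_j$-feasible). You instead set up a threshold lemma for the greedy sweep, $R_i(D)\ge q_i\iff D\ge D_i$, and exhibit greedy-with-budget-$D^*$ as the witnessing optimal solution for both (A) and (B); your modification argument for Steps~1 and~2 is also more explicit than the paper's one-line ``It is clear.'' What your route buys is a single uniform tool that simultaneously proves existence of a crossing index and certifies correctness for every $j$ meeting Step~3's condition (not just the smallest), since greedy satisfies (A) whenever $D_{j-1}>D^*$ and (B) whenever $D_j<D^*$. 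What the paper's stretching buys is that it never commits to a particular optimal solution and so sidesteps the question of whether greedy itself is the right witness. Your explicit isolation of the equality case $D_i=D^*$ and the proposed perturbation fix is sharper than the paper, which relegates exactly that boundary issue to its footnote on Step~3; note, though, that the claim ``$D_j\le D^*$ with $j$ minimal gives $D_j<D^*$'' is not literally true and indeed belongs to the tie case you then handle.
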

\begin{proof}
It is clear that Steps~1 and 2 are correct.
To show that Step~3 is also correct, 
we {\em stretch} a step $s$ of an optimal step function
by making it as long as possible as follows.
Move $s^{(l)}.x$ (resp. $s^{(r)}.x$) to the left (resp. right) as far as possible without changing the cost
of the step function.
The step that has been stretched is called a {\em stretched step.} 
Let us assume without loss of generality that $s^*_j$ corresponding to $P_j$ returned by Step~3 is stretched.
Since $D_{j-1}\geq D^*$,
we must have $s^{*(l)}_j.x \leq s^{(l)}_j.x$.

The optimal solution $F^*_j(x)$ for $\cup_{i=1}^j P_i$ has cost $D_j$,
which is too small for $P$ to be $D_j$-feasible.
Regarding the remaining points $\cup_{i=j+1}^k P_i$,
let $G^*_j(x)$ denote the optimal $(k-j)$-step function for this point set.
If $D(\cup_{i=j+1}^k P_i, G^*_j(x)) \leq D_j$,
the $P$ would be $D_j$-feasible.
Since it is not,
 $s^{(r)}_j.x$ would be stretched to the right under the optimal solution $F^*_k(x)$,
i.e., $s^{*(r)}_j.x \geq s^{(r)}_j.x$.
Together with $s^{*(l)}_j.x \leq s^{(l)}_j.x$,
it follows that $P_j$ is spanned by $s^*_j$.
\end{proof}

\begin{lemma}\label{lem:Bigislinear}
Procedure {\tt Big$({\cal P},k)$} runs in linear time in $n$.
\end{lemma}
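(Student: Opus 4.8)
The plan is to charge each of the three steps of {\tt Big}$({\cal P},k)$ separately and then sum, exploiting the fact that $k$ is a fixed constant so that any bound of the form $O(k^{c}n)$ collapses to $O(n)$. Throughout I would lean on three linear-time primitives already available: Algorithm {\tt 1-Step} (linear by Lemma~\ref{lem:1step}), the $D$-feasibility test (costing $O(kn)$ by Lemma~\ref{lem:feasibility}), and the inductive hypothesis of Sec.~\ref{sec:kstep} that an optimal $m$-step function for any subset $Q$ can be computed in $O(|Q|)$ time whenever $m<k$.

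First I would dispatch Steps~1 and 2. Each runs {\tt 1-Step} on a single partition ($P_1$ or $P_k$), of size at most $\lceil n/k\rceil$, and then performs one feasibility test on all of $P$. The {\tt 1-Step} call is $O(n)$ and the single feasibility test is $O(kn)$, so each of these two steps costs $O(kn)=O(n)$ since $k$ is constant.

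Next I would bound Step~3, which is the crux. For each candidate index $j$ with $1<j<k$, checking the ``$D_{j-1}$-feasible but $D_j$-infeasible'' condition requires (i) the optimal $(j-1)$- and $j$-step functions for the prefixes $\cup_{i=1}^{j-1}P_i$ and $\cup_{i=1}^{j}P_i$, each of step count strictly below $k$ and hence computable in linear time by the inductive hypothesis, and (ii) feasibility tests on $P$, each $O(kn)$. The key point is that there are fewer than $k$ candidate indices, and the costs $D_j$ are monotone enough (they cross below the optimum $D^{*}$ exactly once as $j$ grows, per the footnote) that a single linear scan over the $O(k)$ indices — or a binary search — locates $j$. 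Thus Step~3 performs $O(k)$ optimal-subfunction computations, each $O(n)$, and $O(k)$ feasibility tests, each $O(kn)$, for a total of $O(k^{2}n)$.

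Combining, the whole procedure costs $O(n)+O(n)+O(k^{2}n)=O(k^{2}n)$, which is $O(n)$ because $k$ is a fixed constant. The main obstacle is pinning down Step~3 cleanly: one must argue that the search over $j$ invokes the feasibility test only $O(k)$ times rather than once per point, and that every optimal $(j-1)$- or $j$-step function needed there has step count below $k$ and is therefore supplied by the inductive hypothesis. Once those two facts are secured, the constancy of $k$ makes the linear bound immediate.
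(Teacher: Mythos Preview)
Your proposal is correct and follows essentially the same approach as the paper: bound Steps~1 and~2 by one call to {\tt 1-Step} plus one $O(kn)$ feasibility test each, and bound Step~3 by invoking the inductive hypothesis (optimal $m$-step for $m<k$ in linear time) together with $O(k)$ feasibility tests, yielding $O(f(k)\,n)=O(n)$ overall. The only cosmetic difference is that the paper phrases the search in Step~3 as a binary search over the indices, whereas you allow either a linear scan or a binary search; since there are only $k-1$ indices and $k$ is constant, both give the same asymptotic bound, and your explicit $O(k^{2}n)$ is simply a concrete instance of the paper's $O(f(k)n)$.
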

\begin{proof}
In Step~1, the optimal 1-step function for $P_1$ can be found in $O(|P_1|)$ time by Lemma~\ref{lem:1step},
and it takes $O(kn)$ time to test if $P$ is not $D_1$-feasible by Lemma~\ref{lem:feasibility}.
Similarly, Step~2 can be carried out in $O(n)$ time.
To carry out Step~3,
we compute, using binary search, $\lceil \log n\rceil$ values out of $\{D_i\mid 1\leq i \leq k-1\}$,
which takes $O(f(k)n)$ time for some function $f(k)$,
under the assumption that any $i$-step function problem, $i < k$,
is solvable in time linear in the size of the input point set.
\end{proof}

\section{Algorithm}\label{sec:algorithm}
\subsection{Optimal $k$-step function}
In this section we are assuming that we can solve any $(j-1)$-step and anchored
$j$-step function problems for any $2\le j <k$.
We have shown that this is true for $k=2$ in the previous section.
So the basis of recursion holds. 

Let us find an optimal doubly anchored $k$-step function, $^{\downarrow}\!F_k^{\downarrow *}(x)$,
which consists of $k$ horizontal segments,
$s_i^*, i=1,2, \ldots, k$,
satisfying $s_1^{*(l)}.x=p_1.x$,  $s_1^*.y=a$,   $s_k^{*(r)}.x = p_n.x$,  and $s_k.y=b$,
where $a$ and $b$ are given constants.
Let $P_i^*$  {be}  the set of points of $P$ vertically closest to $s_i^*$. For each segment $s_i^*$, there are critical points with respect to $s_i^*$, lying on the opposite sides of $s_i^*$.
In order to find $^{\downarrow}\!F_k^{\downarrow *}(x)$, 
we first execute {\tt Big$({\cal P}, k)$}
and identify a big partition containing at least $\lfloor n/k\rfloor$ points,
which are vertically closest to the same segment in some optimal solution.

Once a big partition, say $P_j$, is identified,
We first determine $U$ and $L$ for $P_j$ as described in Algorithm {\tt 1-Step}$(P)$.
To illustrate the idea,
let us consider a special case where $j=1$ and $k=2$.
We execute {\tt $l$-Anch-2-Step}$(P,U)$ and  {\tt $l$-Anch-2-Step}$(P,L)$
and determine $s_U$ and $s_L$ as in  Algorithm {\tt 1-Step}$(P)$.
We can thus eliminated 1/6 of the points in $P_1$.
We repeat this with the reduced $P$.
It may turn out that the right partition is the big partition in the next round.
Then we can repeat the above process symmetrically.
Eventually, the size of $P$ gets small enough,
so that we can find the solution using an exhaustive method.

For a general $k$ and $j>1$,
we need to find the left- and right-anchored solution for $U$ and $L$,
and prune $1/6$ of the points in $P_j$ using {\tt Prune-Big$(k,P_j)$}, given below,
which is very similar to Algorithm {\tt 1-Step}$(P)$.
Let $P_j$ be a big partition spanned by $s^*_j$,
which is an input to the following procedure.
\begin{procedure}{\rm :} {\tt Prune-Big$(k,P_j)$}

\noindent
{\bf Output:} 1/6 of points in $P_j$ removed.

\begin{enumerate}
\item
Determine $U$ and $L$ for $P_j$ as in Algorithm {\tt 1-Step}$(P)$. 
\item 
If $j>1$, find two right-anchored $j$-step functions $F_j^{\downarrow *}(x)$ for $\cup_{i=1}^{j} P_i$,
one anchored by $L$ and the other anchored by $U$.
\item
If $j< k$, find two left-anchored $(k-j+1)$-step functions
 $^{\downarrow}\!F^*_{k-j+1}(x)$ for $\cup_{i=j}^k P_i$,
one anchored by $L$ and the other anchored by $U$.
\item
Identify 1/6 of the points in $P_j$ with respect to $L$ and $U$,
which are ``useless''\footnote{See Step~8 of Algorithm~{\tt 1-Step}$(P)$.}
based on $F_j^{\downarrow *}\!(x)$
and $^{\downarrow}\!F^*_{k-j+1}(x)$ found above,
and remove them from $P$. 
\end{enumerate}
\end{procedure}

\begin{lemma}\label{lem:anchored}
{\tt Prune-Big$(k,P_j)$} runs in linear time when $k$ is a constant..
\QED
\end{lemma}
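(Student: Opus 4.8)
The plan is to bound the cost of each of the four steps of {\tt Prune-Big}$(k,P_j)$ individually and then add them, relying on the inductive hypothesis stated at the opening of Section~\ref{sec:algorithm}: every $(j-1)$-step and every anchored $j$-step problem with $j<k$ is solvable in time linear in the size of its input point set, with a constant that may depend on the (constant) number of steps.

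First I would dispose of Steps~1 and~4, which are the ``local'' work on $P_j$. Step~1 computes $U$ and $L$ for $P_j$ exactly as in the opening steps of Algorithm~{\tt 1-Step}$(P)$ --- form the $\lceil|P_j|/2\rceil$ pairs, compute the two bisectors of each pair, and select the two horizontal lines $U$ and $L$ by median finding --- so it is a constant number of linear scans and linear-time selections and runs in $O(|P_j|)$ time. Step~4 examines the anchored solutions returned by Steps~2 and~3 and, as in Steps~6--8 of Algorithm~{\tt 1-Step}$(P)$ and Lemma~\ref{lem:one6th}, discards one point from each of the one-sixth of the pairs of $P_j$ that turn out to be useless; this is a single pass over $P_j$ and again costs $O(|P_j|)$ time. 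Both bounds are $O(n)$.

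Next I would account for Steps~2 and~3, which carry the real cost. Step~2 solves two right-anchored $j$-step problems on $\cup_{i=1}^{j}P_i$ (anchored at $U$ and at $L$), and Step~3 solves two left-anchored $(k-j+1)$-step problems on $\cup_{i=j}^{k}P_i$. Each input has size at most $n$, and the number of steps in every such subproblem is at most $k$. Invoking the inductive hypothesis, each of these at most four calls runs in time linear in its input, hence $O(n)$; since the number of calls does not depend on $n$, Steps~2 and~3 together take $O(n)$ time. Summing the four bounds gives an overall running time of $O(n)$, with a constant factor that grows with $k$ --- which is precisely why the statement is restricted to constant $k$.

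I expect the main obstacle to be the bookkeeping behind the phrase ``by the inductive hypothesis'' in Steps~2 and~3. In the interior case $1<j<k$ both $j$ and $k-j+1$ are strictly smaller than $k$, so the hypothesis applies verbatim; the care is needed at the boundary cases $j=1$ and $j=k$, where only one of Steps~2,~3 fires but a $k$-step (one-sidedly) anchored problem is solved, so one must verify that these calls are resolved within the same recursion without a circular dependence on the very problem being solved. Once this is reconciled with the order in which the recursion establishes solvability, the remaining argument is merely the sum of $O(1)$ linear-time contributions.
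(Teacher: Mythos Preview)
The paper offers no proof of this lemma at all --- it is stated with an immediate \QED\ and left to the reader --- so there is nothing to compare against; your write-up is precisely the kind of justification the paper omits, and the step-by-step accounting (pairing/bisector/selection for Step~1, inductive hypothesis for Steps~2--3, linear scan for Step~4) is correct.

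Your flagged obstacle is genuine and worth keeping explicit rather than waving away. When $1<j<k$ both anchored subproblems have strictly fewer than $k$ steps, so the inductive hypothesis of Section~\ref{sec:algorithm} applies directly. But at $j=1$ (resp.\ $j=k$) only Step~3 (resp.\ Step~2) fires and asks for a left-anchored (resp.\ right-anchored) $k$-step solution, which the stated hypothesis ``anchored $j$-step for $2\le j<k$'' does not cover. The paper does not spell out how this is resolved either; the intended reading is that at level $k$ one first establishes the doubly-anchored $k$-step problem (using only level $<k$ ingredients), then the singly-anchored $k$-step problem (using doubly-anchored $k$-step and level $<k$), and only then the unanchored $k$-step problem via {\tt $k$-Step}$(P)$ --- mirroring the order of Sections~\ref{sec:cond2step} and~\ref{sec:1step} for $k=2$. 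If you make that layering explicit, the boundary calls in {\tt Prune-Big} are to already-established subroutines and no circularity arises.
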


We can now describe our algorithm formally as follows.

\begin{algorithm}{\rm :} {\tt $k$-Step}$(P)$. 

\noindent
{\bf Output:}  Optimal $k$-step function $F^*_k(x)$
\begin{enumerate}
\item 
Divide $P$ into partitions $\{P_i \mid i = 1, 2, \ldots, k\}$,
satisfying the contiguous condition,
such that their sizes differ by no more than one.
\item 
Execute Procedure {\tt Big$({\cal P},k)$} to find a big partition $P_j$
spanned by $s^*_j$. 
\item
Execute Procedure {\tt Prune-Big$(k,P_j)$}.
\item
If $|P| > c$ for some fixed $c$, 
repeat Steps~1 to 3 with the reduced $P$.
\end{enumerate}
\end{algorithm}

\subsection{Analysis of algorithm}
To carry out Step 1 of Algorithm {\tt $k$-Step}$(P)$, 
we first find the $(hn/k)^{th}$ smallest among $\{p_i.x \mid 1\leq i \leq n\}$,
for $h=1, 2, \ldots, k-1$.
We then place each point in $P$ into $k$ partitions delineated by these
$k-1$ values.
It is clear that this can be done in $O(kn)$ time.\footnote{This could be done in $O(n\log k)$ time.}
As for Step~2,
we showed in Sec.~\ref{sec:findBig} that finding a big partition spanned by an optimal step
$s_j^*$ takes $O(n)$ time, since $k$ is a constant.
Step~3 also runs in $O(n)$ time by Lemma~\ref{lem:anchored}.
Since Steps 1 to 3 are repeated $O(\log n)$ times,
each time with a point set whose size is at most a constant fraction of the size of the previous set,
the total time is also $O(n)$, when $k$ is a constant.
By solving a recurrence relation for the running time of Algorithm~{\tt $k$-Step}$(P)$,
we can show that it runs in $O(2^{2k\log k}n)=O(k^{2k}n)$ time.

\begin{theorem}
Given a set of $n$ points in the plane $P=\{p_1,p_2,\ldots, p_n\}$,
we can find the optimal $k$-step function that minimizes the maximum distance
to the $n$ points in $O(k^{2k} n)$ time.
\QED
\end{theorem}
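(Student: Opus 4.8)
The plan is to prove both halves of the claim --- correctness and the $O(k^{2k}n)$ time bound --- by induction on $k$, taking as the induction hypothesis that every $(j-1)$-step problem and every left-, right-, or doubly-anchored $j$-step problem with $2\le j<k$ can be solved correctly in time linear in its input size. Lemmas~\ref{lem:1step} and~\ref{lem:doublyAnchored}, together with the anchored 2-step results of Section~\ref{sec:cond2step}, establish the case $k=2$, so the induction is grounded.

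For correctness I would show that each pass through Steps~1--3 of {\tt $k$-Step}$(P)$ preserves an optimal $k$-step function. Step~1 produces a balanced $k$-partition satisfying the contiguity condition, so Lemma~\ref{lem:big} guarantees that some $P_j$ is a big partition spanned by the optimal step $s^*_j$, and Lemma~\ref{lem:Bigiscorrect} certifies that {\tt Big$({\cal P},k)$} returns exactly such a $P_j$. Inside {\tt Prune-Big$(k,P_j)$}, the cut values $U$ and $L$ and the anchored sub-solutions for $\cup_{i=1}^{j}P_i$ and $\cup_{i=j}^{k}P_i$ play the role that $U$, $L$ and the single-center solutions played in Algorithm {\tt 1-Step}$(P)$. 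Hence the same case analysis as in Lemma~\ref{lem:one6th} shows that the $1/6$ of the points discarded are dominated by their partners on the side of $U$ (or $L$) where $s^*_j$ must lie, and so cannot be critical for $s^*_j$; retaining the at most two critical points of $P_j$ leaves the optimum unchanged. When $|P|\le c$ the instance is solved exactly in constant time, completing the correctness induction.

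For the running time, let $T_k(n)=c_k n$ denote the worst-case time on $n$ points (anchored variants included). A single pass costs $O(kn)$ for Step~1; by Lemma~\ref{lem:Bigislinear}, {\tt Big$({\cal P},k)$} performs $O(\log k)$ binary-search probes, each solving one subproblem of order at most $k-1$ and one $O(kn)$ feasibility test (Lemma~\ref{lem:feasibility}); and by Lemma~\ref{lem:anchored}, {\tt Prune-Big} solves $O(1)$ anchored subproblems of order at most $k-1$. Thus one pass on $m$ points costs at most $\bigl(\alpha\,\log k\cdot c_{k-1}+\beta\,k\log k\bigr)m$ for absolute constants $\alpha,\beta$. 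Since {\tt Prune-Big} always deletes at least $\tfrac{1}{6}\lfloor m/k\rfloor$ points, successive instance sizes decay geometrically with ratio $1-\tfrac{1}{6k}$, whence $\sum_i m_i\le 6k\,n$. Summing the per-pass costs gives $c_k\le 6k\bigl(\alpha\,\log k\cdot c_{k-1}+\beta\,k\log k\bigr)$, whose dominant recurrence $c_k=O(k\log k)\,c_{k-1}$ telescopes to $c_k=\prod_{i=2}^{k}O(i\log i)=2^{O(k\log k)}=O(k^{2k})$.

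The hard part will be this last step: controlling the recursion in $k$. Each drop in $k$ multiplies the hidden constant by a factor of order $k\log k$ --- the $6k$ from the geometric sum over passes and the $\log k$ probes inside {\tt Big} --- so the constant does not remain bounded but grows like $k!$ up to polynomial factors, which is precisely what yields the $k^{2k}$ coefficient. I would take care to verify that the anchored subproblems invoked by {\tt Prune-Big} are genuinely of order strictly less than $k$ (so the induction is well-founded), and that the geometric-decay argument remains valid even though the big partition --- and hence which order-$(k-1)$ subproblems are solved --- may change from pass to pass. Once these points are checked, both correctness and the time bound follow.
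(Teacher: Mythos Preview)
Your proposal is correct and follows the paper's own approach: the paper's argument (Section~\ref{sec:algorithm}) is precisely the induction on $k$ you describe, invoking Lemmas~\ref{lem:big}, \ref{lem:Bigiscorrect}, \ref{lem:Bigislinear}, and \ref{lem:anchored}, and then asserting without derivation that ``by solving a recurrence relation \ldots\ it runs in $O(2^{2k\log k}n)=O(k^{2k}n)$ time.'' You actually supply that derivation --- the $6k$ geometric-sum factor times the $O(\log k)$ probes times $c_{k-1}$ --- which the paper omits, so your write-up is more complete than the original on this point. The caveat you flag about anchored subproblems of order exactly $k$ when $j=1$ or $j=k$ is real and is glossed over in the paper as well; it is handled by carrying the anchored variants through the induction simultaneously, exactly as your definition of $T_k$ already anticipates.
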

Thus the algorithm is optimal for a fixed $k$.


\section{Conclusion and Discussion}\label{sec:conclusion}
We have presented a linear time algorithm to solve the optimal $k$-step function problem,
when $k$ a constant.
Most of the effort is spent on identifying a ``big partition.''
It is desirable to reduce the constant of proportionality. 

The {\em size-$k$ histogram construction problem}~\cite{guha2007},
where the points are not weighted, 
is similar to the problem we addressed in this paper.
Its generalized version,
where the points are weighted,
is equivalent to our problem, and thus can be solved in optimal linear time when $k$ is a constant.
The {\em line-constrained $k$ center problem} is defined by:
Given a set $P$ of weighted points in the plane and a horizontal line $L$,
determine $k$ centers on $L$
such that the maximum weighted distance of the points to their closest centers is minimized.
This problem was solved in optimal $O(n\log n)$ time for arbitrary $k$ even if the points
are sorted~\cite{karmakar2013,wang2014a}.
Our algorithm presented here can be applied to solve this problem 
in $O(n)$ time if $k$ is a constant.

A possible extension of our work reported here is to use a cost other than the weighted vertical distance.
There is a nice discussion in \cite{guha2007} on the various measures one can use. 
Our complexity results are valid
if the cost is more general than (\ref{eqn:pointCost}),
in particular, $D(p, F(x))\triangleq d(p, F(x))^2 w(p)$,
which is often used as an error measure.
\section*{Acknowledgement}\label{sec:ack}
This work was supported in part by Discovery Grant \#13883 from
the Natural Science and Engineering Research Council (NSERC) of Canada and in part by MITACS,
both awarded to Bhattacharya.

\section*{Reference}


\end{document}